\newtheorem{thm}{Theorem}
\newtheorem{lemma}{Lemma}
\newtheorem{rk}{Remark}
\numberwithin{equation}{section} \setcounter{tocdepth}{1}
\newcommand{\bea}{\begin{eqnarray}}
	\newcommand{\eea}{\end{eqnarray}}
\begin{document}
	\title [Bubble coalescence in interacting system of DNA]
	{Bubble coalescence in interacting system of DNA molecules}
	
	\author {U.A. Rozikov}
			 
	\address{ U.Rozikov$^{a,b,c}$\begin{itemize}
			\item[$^a$] V.I.Romanovskiy Institute of Mathematics,  9, Universitet str., 100174, Tashkent, Uzbekistan;
			\item[$^b$] AKFA University, National Park Street, Barkamol MFY,
			Mirzo-Ulugbek district, Tashkent, Uzbekistan;
			\item[$^c$] National University of Uzbekistan,  4, Universitet str., 100174, Tashkent, Uzbekistan.
	\end{itemize}}
	\email{rozikovu@yandex.ru}
	
	\begin{abstract}
We consider two models of interacting DNA molecules: \textit{First} is (four parametric)  bubble coalescence model in interacting DNAs (shortly: BCI-DNA). 
\textit{Second} is (three parametric)  bubble coalescence model in a condensed DNA molecules (shortly BCC-DNA).  
 
To study bubble coalescence thermodynamics of BCI-DNA and BCC-DNA models we use methods of statistical physics.  Namely, we define  Hamiltonian of each model and give their translation-invariant Gibbs measures (TIGMs). 

For the first model we find parameters such that corresponding Hamiltonian has up to three  TIGMs (three phases of system)  biologically meaning existence of three states: ``No bubble coalescence'',  ``Dominated soft zone'', ``Bubble coalescence''.

For the second model we show that for any (admissible) parameters this model has unique TIGM. This is a state where ``No bubble coalescence'' phase dominates. 
\end{abstract}
\maketitle

{\bf Mathematics Subject Classifications (2010).} 92D20; 82B20; 60J10.

{\bf{Key words.}} {\em DNA, bubble, configuration, Cayley tree,
Gibbs measure, Potts model}.

\section{Introduction}
It is known that \cite{book} each molecule of DNA is a double helix formed from two complementary strands of nucleotides
held together by hydrogen bonds between $G+C$ and $A+T$ base pairs, where  $C$=cytosine, $G$=guanine, $A$=adenine,
and $T$=thymine. 

Following \cite{FM}, \cite{Me} (see also references therein) we note that under physiological conditions the double helix is the equilibrium structure of DNA, its
stability controlled by hydrogen bonding of base pairs
 and stacking between these pairs. By change 
of temperature ($T>0$) double-stranded DNA progressively denatures, yielding regions of single-stranded DNA
(DNA bubbles) consisting broken base pairs. Consequently, the double strand fully denatures, the helix-coil transition at the melting temperature $T_m$.
Fueled by thermal activation, DNA bubbles occur spontaneously and fluctuate in size until closure ($T<T_m$) or
denaturation ($T>T_m$). This DNA breathing can be interpreted as a random walk in the one-dimensional coordinate $x$, the number of denatured base pairs, when one assumes that base pair unzipping and zipping
occur on a slower time scale than the relaxation of the
polymeric degrees of freedom of the bubbles. 

Investigation of DNA breathing (the bubble
dynamics) is motivated by providing a test case for new methods in statistical mechanical systems, where the dynamics 
of DNA bubbles can be probed on the single molecule level
in real time.

In \cite{FM} the authors showed that the
fluctuation dynamics of DNA denaturation bubbles can be
mapped onto the imaginary time Schr\"odinger equation of
the quantum Coulomb problem, allowing to calculate
the bubble lifetime distributions and associated correlation
functions depending on the temperature.  

In \cite{NPA} the authors studied the coalescence of two DNA-bubbles initially located at weak segments and separated by a more stable barrier region in a designed construct of double-stranded DNA. Moreover, the bubble dynamics is mapped on the problem of two vicious walkers in opposite potentials.
In Fig.\ref{bub} a schematic version of this model is given. 

\begin{figure}
	\includegraphics[width=14cm]{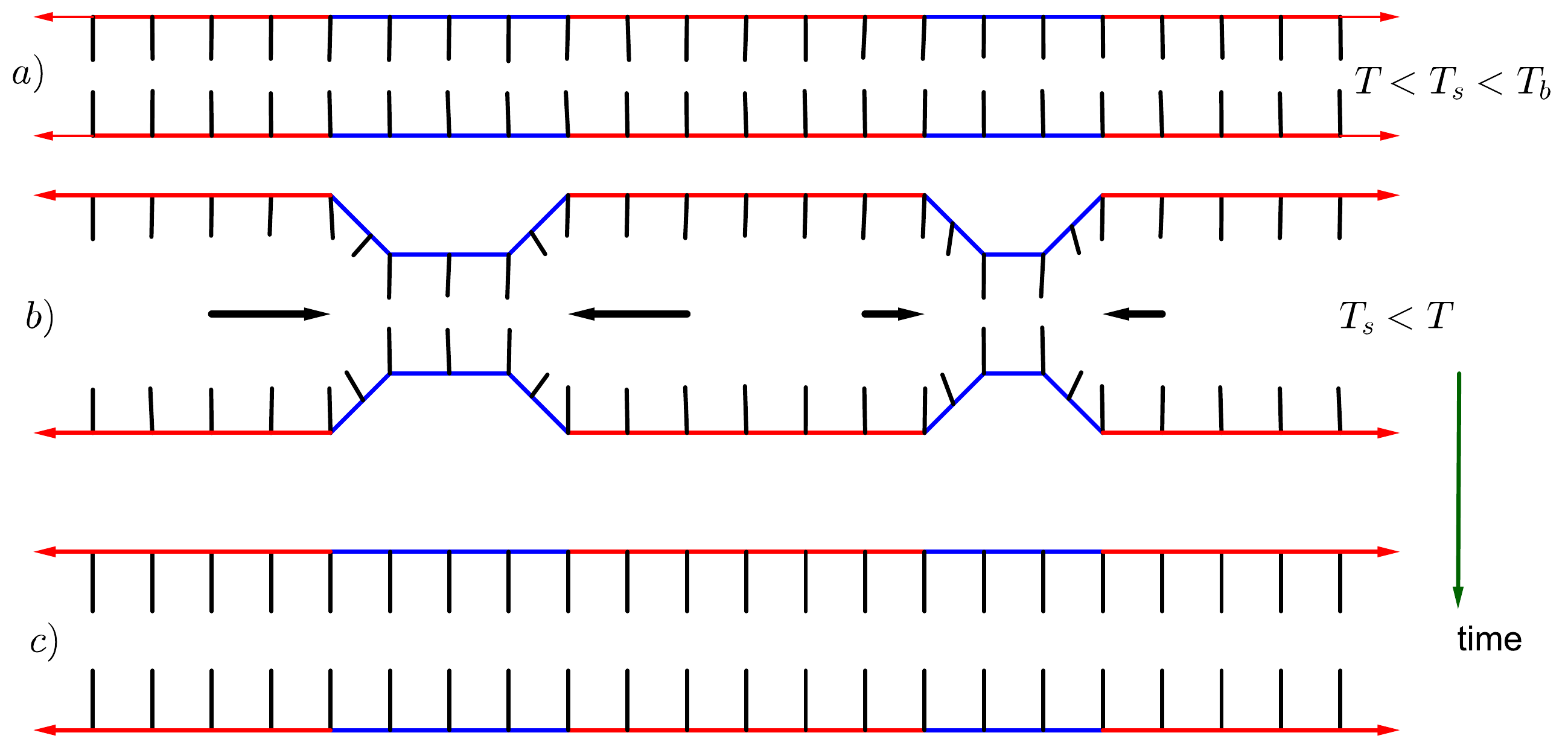}\\
	\caption{A schematic of the bubble coalescence setup in a DNA molecule. a) All base pairs closed when $T<T_s<T_b$, where $T_s$ melting temperature of soft zones (red) and $T_b$ is the melting temperature of barrier zones (blue). b) Soft zones open by raising the temperature above $T_s$. Successive
		opening of the barrier driven mainly by fluctuations $(T<T_b)$ or drift $(T>T_b)$ until coalescence. The state of DNA at a region is defined as the positions of  interfaces between the closed and broken base pairs. c) The state when barriers are removed (occupied by soft zones).}\label{bub}
\end{figure}

The structure of DNA can be described using
methods of statistical physics (see \cite{Sw}, \cite{T}). This study makes an important connection between the structure of DNA sequence and {\it temperature}; e.g., phase transitions in such a system may be interpreted as a conformational restructuring.

Fig.\ref{bub} shows the bubble coalescence in a \textbf{single} DNA molecule. But DNA as a polymer has physical properties,\footnote{https://bionano.physics.illinois.edu/node/203} under the right conditions, DNA molecules attract and condense into a compact state. The physical properties of DNA are broadly exploited by cells to perform the molecular feats necessary for life including storage of information, replication and repair of that information, and regulataion of how that information is expressed. 

In this paper we consider an \textbf{infinite set} of DNA molecules and use tree-hierarchy (introduced in \cite{Rb}) of this set of DNAs to give interactions between neighboring molecules of DNA. 

We study two type models: 

\textit{First model}  is the bubble coalescence in each of interacting DNAs depending on four parameters: (1) temperature; (2) \textit{two} distinct parameter giving the inner interaction of base pairs (in each DNA); (3)  outer interactions of base pairs in a DNA with base pairs of neighboring DNAs. This is the bubble coalescence model in interacting DNAs (shortly: BCI-DNA model). 

\textit{Second model} is the bubble coalescence in a condensed DNA (shortly BCC-DNA) molecules. DNA condensation refers to the process of compacting DNA molecules, which is defined as ``the collapse of extended DNA chains into compact, orderly particles containing only \textbf{one} or a few molecules" (see \cite{TB}).  This model has three parameters, and is the bubble coalescence in one molecule of condensed DNAs, i.e., BCC-DNA model.

For investigation of BCI-DNA and BCC-DNA models we use methods of statistical physics (as in \cite{Robp}, \cite{Rb}, \cite{Rp} and \cite{Rm}), to study its  bubble coalescence thermodynamics.

By tree-hierarchy the DNAs of BCI-DNA and BCC-DNA models are embedded in a Cayley tree. Therefore, their thermodynamics is studied by translation-invariant Gibbs measures (TIGMs) on the Cayley tree.
Note that non-uniqueness of Gibbs measure corresponds to
phase coexistence in the system of DNAs.

The paper is organized as follows.
In Section 2 we give main definitions. 
Section 3 is devoted to BCI-DNA model. In subsection 3.1  we give a system of functional equations, each solution of which
defines a consistent family of finite-dimensional Gibbs distributions and 
guarantees existence of thermodynamic limit for such distributions. This system is very complicated to solve, after some assumptions, in subsection 3.2,  we reduce it to a one-dimensional fixed point problem. Some numerical computations are used to show that the fixed point equation may have up to three solutions. To each such fixed points corresponds a TIGM.  Thus there up to three TIGMs (non-uniqueness - phase transition).

In subsection 3.3  by properties of Markov chains (corresponding to TIGMs) we give the bubble coalescence properties of the model. 
Section 3.4 devoted to biological interpretations of results.

Section 4 is devoted to BCC-DNA model,  we show that for any (admissible) parameters this model has unique TIGM (uniqueness-no-phase transition). 

\section{Preliminaries}

For convenience of a reader let us recall some definitions (see  \cite{R}-\cite{Rm}).\\

{\bf Cayley tree.} The Cayley tree $\Gamma^k$ of order $ k\geq 1 $ is an infinite tree,
i.e., a graph without cycles, such that exactly $k+1$ edges
originate from each vertex. Let $\Gamma^k=(V,L,i)$, where $V$ is the
set of vertices $\Gamma^k$, $L$ the set of edges and $i$ is the
incidence function setting each edge $l\in L$ into correspondence
with its endpoints $x, y \in V$. If $i (l) = \{ x, y \} $, then
the vertices $x$ and $y$ are called the {\it nearest neighbors},
denoted by $l = \langle x, y \rangle $. The distance $d(x,y), x, y
\in V$ on the Cayley tree is the number of edges of the shortest
path from $x$ to $y$:
$$
d (x, y) = \min\{d \,|\, \exists x=x_0, x_1,\dots, x_{d-1},
x_d=y\in V \ \ \mbox {such that} \ \ \langle x_0,
x_1\rangle,\dots, \langle x_{d-1}, x_d\rangle\} .$$

For a fixed $x^0\in V$ we set $ W_n = \ \{x\in V\ \ | \ \ d (x,
x^0) =n \}, $
\begin{equation}\label{p*}
	V_n = \ \{x\in V\ \ | \ \ d (x, x^0) \leq n \},\ \ L_n = \ \{l =
	\langle x, y\rangle \in L \  | \ x, y \in V_n \}.
\end{equation}
For any $x\in V$ denote
$$
W_m(x)=\{y\in V: d(x,y)=m\}, \ \ m\geq 1.
$$

{\bf Group representation of the tree.}
Let $G_k$ be a free product of $k + 1$ cyclic groups of the
second order with generators $a_1, a_2,\dots, a_{k+1}$,
respectively, i.e. $a_i^2=e$, where $e$ is the unit element.

It is known that there exists a one-to-one correspondence between the set of vertices $V$ of the
Cayley tree $\Gamma^k$ and the group $G_k$ (see Chapter 1 of \cite{R} for properties of the group $G_k$).

We consider a normal subgroup $\mathcal H_0\subset G_k$ of infinite index constructed as follows.
Let the mapping $\pi_0:\{a_1,...,a_{k+1}\}\longrightarrow \{e, a_1, a_2\}$ be defined by
$$\pi_0(a_i)=\left\{%
\begin{array}{ll}
	a_i, & \hbox{if} \ \ i=1,2 \\
	e, & \hbox{if} \ \ i\ne 1,2. \\
\end{array}
\right.$$ Denote by $G_1$ the free product of cyclic groups
$\{e,a_1\}, \{e,a_2\}$. Consider $f_0: G_k\to G_1$ defined by
$$f_0(x)=f_0(a_{i_1}a_{i_2}...a_{i_m})=\pi_0(a_{i_1})
\pi_0(a_{i_2})\dots\pi_0(a_{i_m}).$$
Then it is easy to see that $f_0$ is a homomorphism and hence
$\mathcal H_0=\{x\in G_k: \ f_0(x)=e\}$ is a normal subgroup of
infinite index.

Consider the factor group
$$G_k/\mathcal H_0=\{\mathcal H_0, \mathcal H_0(a_1), \mathcal H_0(a_2), \mathcal H_0(a_1a_2), \dots\},$$
where $\mathcal H_0(y)=\{x\in G_k: f_0(x)=y\}$. Denote
$$\mathcal H_n=\mathcal H_0(\underbrace{a_1a_2\dots}_n), \ \ \
\mathcal H_{-n}=\mathcal H_0(\underbrace{a_2a_1\dots}_n).$$
In this notation, the factor group can be represented as
$$ G_k/\mathcal H_0=\{\dots, \mathcal H_{-2}, \mathcal H_{-1}, \mathcal H_0, \mathcal H_1, \mathcal H_2, \dots\}.$$
We introduce the following equivalence relation on the set $G_k$: $x\sim y$ if $xy^{-1}\in \mathcal H_0$.
Then $G_k$ can be partitioned to countably many classes $\mathcal H_i$ of equivalent elements.
The partition of the Cayley tree $\Gamma^2$ w.r.t. $\mathcal H_0$ is shown in
Fig. \ref{fig9} (the elements of the class $\mathcal H_i$, $i\in \mathbb Z$, are merely denoted by $i$).\\

{\bf $\mathbb Z$-path.}
Denote
$$
q_i(x) = |W_1(x)\cap \mathcal H_i|, \ \ x\in G_k,
$$
where $|\cdot|$ is the counting measure of a set.
We note that (see \cite{RI}) if $x\in \mathcal H_m$, then
$$
q_{m-1}(x)=1,  \ \ q_m(x)=k-1, \ \ q_{m+1}(x)=1.
$$
From this fact it follows that
for any $x\in V$, if $x\in \mathcal H_m$ then there is a unique two-side-path (containing $x$) such that
the sequence of numbers of equivalence classes for vertices of this path
in one side are $m, m+1, m+2,\dots$ in the second side the sequence is $m, m-1,m-2,\dots$.
Thus the two-side-path has the sequence of numbers of equivalent classes as $\mathbb Z=\{...,-2,-1,0,1,2,...\}$.
Such a path is called $\mathbb Z$-path (In Fig. \ref{fig9} one can see the unique $\mathbb Z$-paths of each vertex of the tree.)

Since each vertex $x$ has its own $\mathbb Z$-path one can see that
the Cayley tree considered with respect to normal subgroup $\mathcal H_0$ contains infinitely many (countable) set of
$\mathbb Z$-paths.\\
\begin{figure}
	\includegraphics[width=13cm]{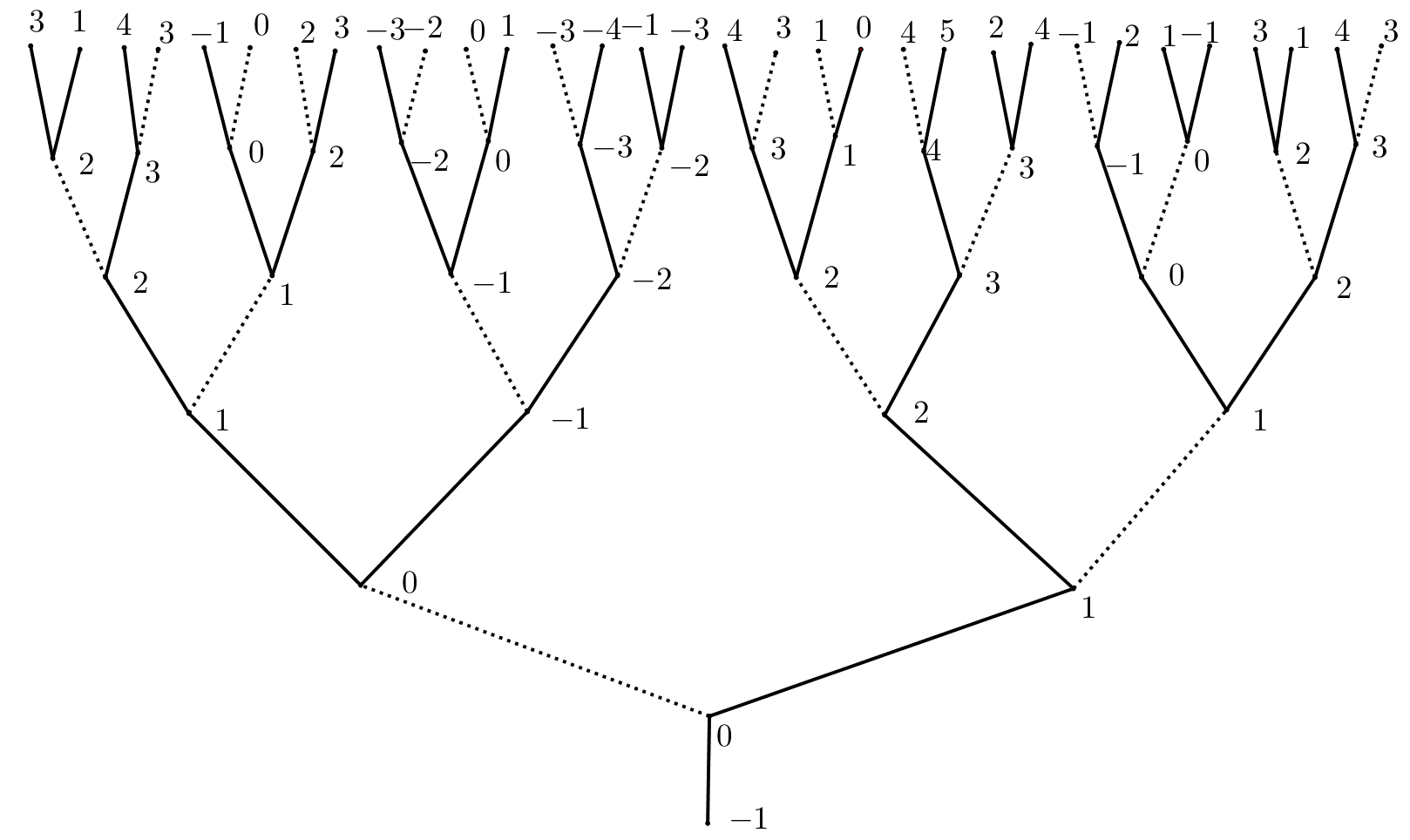}\\
	\caption{The partition of the Cayley tree $\Gamma^2$ w.r.t.
		$\mathcal H_0$, the elements of the class $\mathcal H_i$, $i\in \mathbb Z$,
		are denoted by $i$. $\mathbb Z$-paths are solid lines.}\label{fig9}
\end{figure}

{\bf Configuration space.} Consider spin values from $\Phi=\Psi\times \Phi_q$, with $\Psi=\{b,r\}$ and $\Phi_q=\{1,\dots,q\}$, where $b=blue$, $r=red$, $q\geq 1$. 

A configuration is any mapping $s: x\in V:\to s(x)=(\varphi(x), \sigma(x))\in \Phi$. Denote by $\Omega=\Phi^V$ the set of configurations. 

Configurations in
$V_n$  are defined analogously and the set of all
configurations in $V_n$ is denoted by $\Omega_n$.

The restriction of a configuration on
a $\mathbb Z$-path is called {\it a DNA}. 
Since there are countably many $\mathbb Z$-paths we have a countably many distinct DNAs.

In Fig.\ref{fta} we give a collection of interacting DNAs: on each $\mathbb Z$-path red points correspond  to soft zones (see Fig. \ref{bub}) and blue points are presenting the barrier zones.\\ 

\begin{figure}
	\includegraphics[width=15cm]{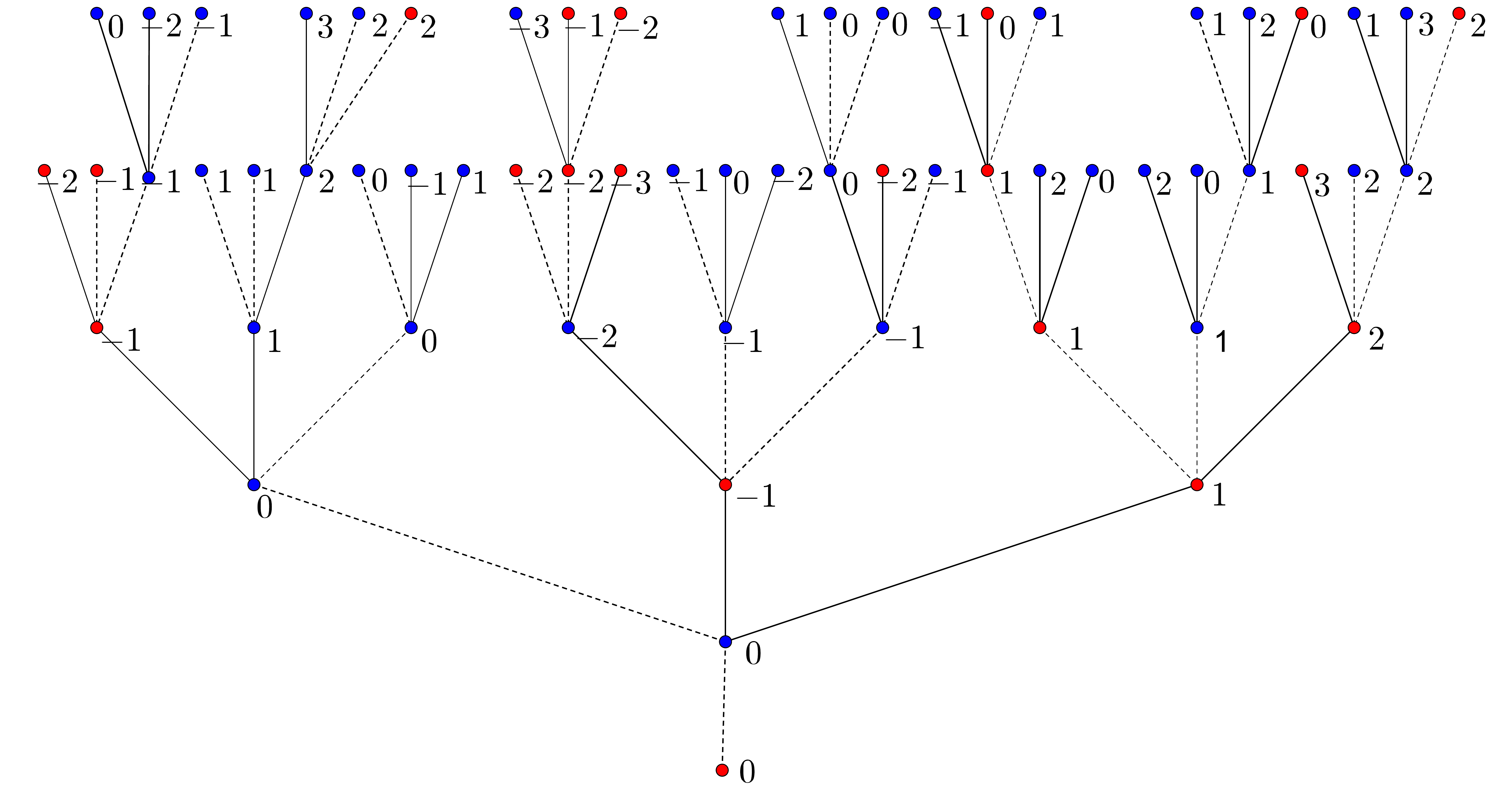}\\
	\caption{A part of Cayley tree of order four. Dashed edges do not belong to a DNA. Each solid edge belongs to a $\mathbb Z$-path, i.e., is a region of a DNA. Neighboring two DNAs interact if the color of neighboring vertices (separating DNAs) are the same. But neighboring vertices on a DNA interact if they have distinct colors.}\label{fta}
\end{figure}

{\bf Tree-hierarchy of the set of DNAs.} 

Define a Cayley tree
hierarchy of the set of DNAs as follows.

We say that two DNA {\it are neighbors} if there is an edge (of the Cayley tree) such that
one its endpoint belongs to the first DNA and
another endpoint of the edge belongs to the second DNA. By our
construction it is clear (see Fig. \ref{fig9}) that
such an edge is unique for each neighboring pair of DNAs. This edge
has equivalent endpoints, i.e. both endpoints belong to the same class
$\mathcal H_m$ for some $m\in \mathbb Z$.

Moreover these countably infinite set of DNAs have a hierarchy that:

(i) any two DNA do not intersect.

(ii) each DNA has its own countably many set of neighboring DNAs;

(iii) for any two neighboring DNAs, say $D_1$ and $D_2$, there exists a unique
edge $l=l(D_1,D_2)=\langle x,y\rangle$ with $x\sim y$ which connects DNAs;

(iv)  for any finite $n\geq 1$ the ball $V_n$ has intersection only with finitely many DNAs.\\

{\bf The Hamiltonian of BCI-DNA model.} 

We consider the following model of the energy of the configuration $s\in \Omega$ of a set of DNAs:
\begin{equation}\label{h}
	H(s)=\sum_{\langle x,y\rangle\in L}f_{x,y}(s(x),s(y)),
\end{equation}
where $s(x)=(\varphi(x), \sigma(x))$ and 

\begin{equation}\label{hd}f_{x,y}(s(x), s(y))=\left\{\begin{array}{lll}
\left(1-\delta_{\varphi(x)\varphi(y)}\right)\left(J_b\delta_{b\varphi(x)}
B(\sigma(x),\sigma(y))+J_r\delta_{r\varphi(x)}
R(\sigma(x),\sigma(y))\right),\\[2mm]
\ \ \ \ \ \ \ \ \ \ \ \ \ \ \ \ \ \ \ \ \ \ \ \ \ \ \ \mbox{if} \ \ \langle x,y\rangle\in \mathbb Z-path\\[3mm]
J\delta_{\varphi(x)\varphi(y)}\delta_{\sigma(x),\sigma(y)}, \ \ \mbox{if} \ \ \langle x,y\rangle\notin \mathbb Z-path
	\end{array}
	\right.
\end{equation}
$J>0$ is a coupling constant between neighboring DNAs, $\delta$ is the Kronecker delta, $J_b<0$, $J_r>0$ are parameters, and $B, R: \Phi_q\times\Phi_q \to \mathbb R$ are non-negative functions, which give interaction between DNA base pairs.

\begin{rk} We note that
	\begin{itemize}
\item Hamiltonian (\ref{h}) consists interactions between base pairs of a DNA if the base pairs are in distinct zone (see Fig. \ref{bub}), i.e.,  interactions in a DNA exist between red and blue points. But interaction between neighboring DNAs is given by connecting them edge (a dashed edge in Fig.\ref{fta}) when the endpoints of this edge have the same color. 
\item In this paper, for simplicity, we mainly consider the case when functions $B$ and $R$ are given by the SOS (gradient) function (i.e., SOS model, see \cite{KRsos}) and by Kronecker delta (i.e., Potts model, see \cite{KRK}, \cite{KR}). In case $q=2$ the BCI-DNA model combines Potts models defined on DNA's edges and dashed edges of the Cayley tree (see \cite[Section 2.4]{Robp} for the relevance of the Potts models in several applied fields).  
\end{itemize} 
\end{rk}

{\bf The Hamiltonian of BCC-DNA model.} 

In this model \textbf{any} path of the Cayley tree is considered as a part of DNA, the full Cayley tree is considered  as one molecule of a condensed DNA. 

We consider the following BCC-DNA model of the energy of the configuration $s\in \Omega$:
\begin{equation}\label{hc}
	H(s)=\sum_{\langle x,y\rangle\in L}g(s(x),s(y)),
\end{equation}
where $s(x)=(\varphi(x), \sigma(x))$ and 
\begin{equation}\label{hdc}g(s(x), s(y))=
		\left(1-\delta_{\varphi(x)\varphi(y)}\right)\left(J_b\delta_{b\varphi(x)}
		B(\sigma(x),\sigma(y))+J_r\delta_{r\varphi(x)}
		R(\sigma(x),\sigma(y))\right),		
	\end{equation}
where $J_b<0$, $J_r>0$ are parameters, and $B, R: \Phi_q\times \Phi_q\to \mathbb R$ are non-negative functions, which give interaction between DNA base pairs.

\section{Thermodynamics of the BCI-DNA model}

\subsection{System of functional equations of finite dimensional distributions}
Let $\Omega_n$ be the set of all
configurations on $V_n$.

Define a finite-dimensional distribution of a probability measure $\mu$ on $\Omega_n$ as
\begin{equation}\label{*}
	\mu_n(s_n)=Z_n^{-1}\exp\left\{\beta H_n(s_n)+\sum_{y\in W_n}h_{\varphi(y),\sigma(y), y}\right\},
\end{equation}
where $s_n(x)=(\varphi(x),\sigma(x))$, $x\in V_n$, $\beta=1/T$, $T>0$ is temperature,  $Z_n^{-1}$ is the normalizing factor,
$$\{h_{a, i, x}\in \mathbb R, a\in \Psi, i\in\Phi_q, \, x\in V\}$$ is a collection of real numbers and
$$H_n(s_n)=\sum_{\langle x,y\rangle\in L_n}f_{x,y}(s_n(x),s_n(y)).$$

We say that the probability distributions (\ref{*}) are compatible if for all
$n\geq 1$ and $s_{n-1}\in \Omega_{n-1}$:
\begin{equation}\label{**}
	\sum_{\omega_n\in \Omega_{W_n}}\mu_n(s_{n-1}\vee \omega_n)=\mu_{n-1}(s_{n-1}).
\end{equation}
Here $s_{n-1}\vee \omega_n$ is the concatenation of the configurations.

For $x\in W_{n-1}$ denote
$$S(x)=\{t\in W_n: \langle x,t\rangle\}.$$

For $ x\in V$ we denote by $x_{\downarrow}$ the unique point of the set $\{y\in V:\langle x,y\rangle\}\setminus S(x)$.

It is easy to see that
$$S(x)\cap \mathbb Z-{\rm path}=\left\{\begin{array}{lll}
	\{x_0, x_1\}\subset V, \ \ \mbox{if} \ \  \langle x_\downarrow, x\rangle\notin \mathbb Z-{\rm path}\\[2mm]
	\{x_1\}\subset V, \ \ \ \ \ \ \mbox{if} \ \  \langle x_\downarrow, x\rangle\in \mathbb Z-{\rm path}
\end{array}
\right..$$
We denote
$$S_0(x)=S(x)\setminus \{x_0, x_1\}, \ \ \langle x_\downarrow, x\rangle\notin  \mathbb Z-{\rm path},$$
$$S_1(x)=S(x)\setminus \{x_1\}, \ \ \langle x_\downarrow, x\rangle\in  \mathbb Z-{\rm path}.$$

The following theorem gives a criterion for compatibility of finite-dimensional distributions.

\begin{thm}\label{ei} Probability distributions
	$\mu_n(s_n)$, $n=1,2,\ldots$, in
	(\ref{*}) are compatible iff for any $x\in V\setminus {x^0}$
	the following equations hold:
	
	if $\langle x_\downarrow, x\rangle\notin \mathbb Z-{\rm path}$ then
$$
	z_{b,i,x}=\prod_{t\in \{x_0, x_1\}}{1+\eta^{R(i,q)}\hat z_{r,q,t}+\sum_{j=1}^{q-1}\left(\hat z_{b,j,t}+\eta^{R(i,j)} \hat z_{r,j,t}\right)\over 
		1+\eta^{R(q,q)}\hat z_{r,q,t}+\sum_{j=1}^{q-1}\left(\hat z_{b,j,t}+\eta^{R(q,j)}\hat z_{r,j,t}\right)}\times $$
	\begin{equation}\label{B}
	\prod_{y\in S_0(x)}{1+z_{r,q,y}+\sum_{j=1}^{q-1}\left(\theta^{\delta_{ij}}z_{b,j,y}+z_{r,j,y}\right)\over 
		\theta+z_{r,q,y}+\sum_{j=1}^{q-1}\left(z_{b,j,y}+z_{r,j,y}\right)}, \ \ i\in \Phi_{q-1};
\end{equation}
$$	z_{r,i,x}=\prod_{t\in \{x_0, x_1\}}{\zeta^{B(i,q)}+\hat z_{r,q,t}+\sum_{j=1}^{q-1}
	\left(\zeta^{B(i,j)}\hat z_{b,j,t}+\hat z_{r,j,t}\right)\over 
		1+\eta^{R(q,q)}\hat z_{r,q,t}+\sum_{j=1}^{q-1}\left(\hat z_{b,j,t}+\eta^{R(q,j)}\hat z_{r,j,t}\right)}$$
		\begin{equation}\label{R}
			\prod_{y\in S_0(x)}{1+z_{r,q,y}+\sum_{j=1}^{q-1}\left(z_{b,j,y}+\theta^{\delta_{ij}}z_{r,j,y}\right)\over 
			\theta+z_{r,q,y}+\sum_{j=1}^{q-1}\left(z_{b,j,y}+z_{r,j,y}\right)},  \ \ i\in \Phi_q.
\end{equation}
	if $\langle x_\downarrow, x\rangle\in \mathbb Z-{\rm path}$ then
$$
\hat z_{b,i,x}={1+\eta^{R(i,q)}\hat z_{r,q,x_1}+\sum_{j=1}^{q-1}\left(\hat z_{b,j,x_1}+\eta^{R(i,j)}\hat z_{r,j,x_1}\right)\over 
	1+\eta^{R(q,q)}\hat z_{r,q,x_1}+\sum_{j=1}^{q-1}\left(\hat z_{b,j,x_1}+\eta^{R(q,j)}\hat z_{r,j,x_1}\right)}\times $$
\begin{equation}\label{Bz}
	\prod_{y\in S_1(x)}{1+z_{r,q,y}+\sum_{j=1}^{q-1}\left(\theta^{\delta_{ij}}z_{b,j,y}+z_{r,j,y}\right)\over 
		\theta+z_{r,q,y}+\sum_{j=1}^{q-1}\left(z_{b,j,y}+z_{r,j,y}\right)}, \ \ i\in \Phi_{q-1};
\end{equation}
$$	\hat z_{r,i,x}={\zeta^{B(i,q)}+\hat z_{r,q,x_1}+\sum_{j=1}^{q-1}
	\left(\zeta^{B(i,j)}\hat z_{b,j,x_1}+\hat z_{r,j,x_1}\right)\over 
	1+\eta^{R(q,q)}\hat z_{r,q,x_1}+\sum_{j=1}^{q-1}\left(\hat z_{b,j,x_1}+\eta^{R(q,j)}\hat z_{r,j,x_1}\right)}$$
\begin{equation}\label{Rz}
	\prod_{y\in S_1(x)}{1+z_{r,q,y}+\sum_{j=1}^{q-1}\left(z_{b,j,y}+\theta^{\delta_{ij}}z_{r,j,y}\right)\over 
		\theta+z_{r,q,y}+\sum_{j=1}^{q-1}\left(z_{b,j,y}+z_{r,j,y}\right)},  \ \ i\in \Phi_{q}.
\end{equation}
where $\theta=e^{J\beta}$, $\zeta=e^{J_b\beta}$ and $\eta=e^{J_r\beta}$. 
\begin{equation}\label{alp}\begin{array}{ll}
	z_{a,i,x}=\exp\left(h_{a, i, x}-h_{b, q, x}\right), a\in \Psi, i\in \Phi_q, \, \langle x_\downarrow, x\rangle\notin \mathbb Z-{\rm path}; \\[2mm]
		\hat z_{a,i,x}=\exp\left(h_{a, i, x}-h_{b, q, x}\right), a\in \Psi, i\in \Phi_q, \, \langle x_\downarrow, x\rangle\in \mathbb Z-{\rm path}.
	\end{array}
\end{equation}
\end{thm}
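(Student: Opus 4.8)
The plan is to follow the standard "consistency $\Longleftrightarrow$ functional equations" argument (as in the tree Gibbs-measure literature, e.g. \cite{R}), adapted to the zone/color structure of the BCI-DNA model. First I would fix $n\geq 1$ and a configuration $s_{n-1}\in\Omega_{n-1}$, and expand the left-hand side of the consistency condition (\ref{**}). Writing $W_n=\bigcup_{x\in W_{n-1}}S(x)$, the sum over $\omega_n\in\Omega_{W_n}$ factorizes over the boundary vertices $x\in W_{n-1}$, because the Hamiltonian $H_n$ is a sum of nearest-neighbor terms and each new vertex $t\in S(x)$ contributes exactly one edge $\langle x,t\rangle$. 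Thus
$$
\sum_{\omega_n}\mu_n(s_{n-1}\vee\omega_n)
=Z_n^{-1}\exp\Big\{\beta H_{n-1}(s_{n-1})\Big\}\prod_{x\in W_{n-1}}\prod_{t\in S(x)}\ \sum_{(\varphi(t),\sigma(t))\in\Phi}\exp\Big\{\beta f_{x,t}\big(s(x),s(t)\big)+h_{\varphi(t),\sigma(t),t}\Big\}.
$$
Comparing with $\mu_{n-1}(s_{n-1})=Z_{n-1}^{-1}\exp\{\beta H_{n-1}(s_{n-1})+\sum_{x\in W_{n-1}}h_{\varphi(x),\sigma(x),x}\}$, the Hamiltonian factors $\exp\{\beta H_{n-1}(s_{n-1})\}$ cancel, and (\ref{**}) holds for all $s_{n-1}$ if and only if, for every $x\in W_{n-1}$, the ratio
$$
\frac{\prod_{t\in S(x)}\sum_{(\varphi(t),\sigma(t))}\exp\{\beta f_{x,t}(s(x),s(t))+h_{\varphi(t),\sigma(t),t}\}}{\exp\{h_{\varphi(x),\sigma(x),x}\}}
$$
is independent of the value $s(x)=(\varphi(x),\sigma(x))$; the standard Kolmogorov-extension / induction-on-$n$ argument then upgrades this into the equivalence with the displayed equations, with the common value of the ratio absorbed into $Z_n/Z_{n-1}$.

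Next I would make this pointwise condition explicit. The key observation is that $S(x)$ splits into the (at most two) neighbors $\{x_0,x_1\}$ lying on a $\mathbb Z$-path together with the vertices in $S_0(x)$ (resp. $S_1(x)$) lying off it, and this splitting depends only on whether $\langle x_\downarrow,x\rangle$ is on a $\mathbb Z$-path — which is precisely the case distinction in the statement. For a $t\in S_0(x)\cup S_1(x)$ the edge $\langle x,t\rangle$ is \emph{not} on a $\mathbb Z$-path, so by (\ref{hd}) its energy is $J\delta_{\varphi(x)\varphi(t)}\delta_{\sigma(x)\sigma(t)}$; summing $\exp\{\beta f_{x,t}+h_{\varphi(t),\sigma(t),t}\}$ over the $2q$ values of $(\varphi(t),\sigma(t))$ and dividing by the analogous sum for the reference value $(b,q)$ produces exactly the "$\theta$-factors" $\bigl(1+z_{r,q,y}+\sum_j(\theta^{\delta_{ij}}z_{b,j,y}+z_{r,j,y})\bigr)\big/\bigl(\theta+z_{r,q,y}+\sum_j(z_{b,j,y}+z_{r,j,y})\bigr)$ appearing in (\ref{B})–(\ref{Rz}), after substituting $\theta=e^{J\beta}$ and the normalization (\ref{alp}). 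For $t\in\{x_0,x_1\}$ the edge \emph{is} on a $\mathbb Z$-path, so its energy is the first branch of (\ref{hd}); summing $\exp\{\beta f_{x,t}+h_{\varphi(t),\sigma(t),t}\}$ over $(\varphi(t),\sigma(t))$, separating the $\varphi(t)=b$ and $\varphi(t)=r$ contributions and using $\zeta=e^{J_b\beta}$, $\eta=e^{J_r\beta}$, yields the "$\zeta$-$\eta$-factors" indexed by $\hat z_{\cdot,\cdot,t}$. Dividing by the same expression evaluated at $s(x)=(b,q)$ and setting $z_{b,q,x}=1$ gives the numerators/denominators in (\ref{B})–(\ref{Rz}); taking $\varphi(x)=b$, $\sigma(x)=i$ gives (\ref{B})/(\ref{Bz}), and $\varphi(x)=r$, $\sigma(x)=i$ gives (\ref{R})/(\ref{Rz}). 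One must keep careful track of the index ranges: when $\varphi(x)=b$ the value $i=q$ is the reference and produces a trivial identity, so only $i\in\Phi_{q-1}$ survives, whereas for $\varphi(x)=r$ all $i\in\Phi_q$ give genuine equations — matching the ranges stated.

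Finally, for the converse direction I would run the usual induction: given $\{h_{a,i,x}\}$ satisfying (\ref{B})–(\ref{Rz}), define $Z_n$ by the products of the normalizing sums just computed, and verify (\ref{**}) by reversing the factorization above. I expect the main obstacle to be purely bookkeeping rather than conceptual: correctly handling the asymmetry between a vertex $x$ whose downward edge lies on a $\mathbb Z$-path (then $x$ has a unique on-path child $x_1$, giving $S_1(x)$) and one whose downward edge does not (then $x$ has two on-path children $x_0,x_1$, giving $S_0(x)$), as guaranteed by the $q_i(x)$ count $q_{m-1}=1,q_m=k-1,q_{m+1}=1$ recalled earlier; and checking that the combination of a Potts interaction on $\mathbb Z$-path edges with the SOS/Potts-type $B,R$ terms collapses, after dividing by the reference weight and using (\ref{alp}), into exactly the products written in the statement, with $\theta,\zeta,\eta$ as indicated. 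No step requires an idea beyond the standard tree-recursion machinery; the content is in organizing the two cases and the color sums cleanly.
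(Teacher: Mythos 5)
Your proposal is correct and is exactly the standard tree-recursion argument that the paper itself invokes (its proof is simply a reference to Theorem 5.1 of \cite{R} and Theorem 1 of \cite{RR}): factorize the sum in (\ref{**}) over the children of each boundary vertex, require the resulting product to be proportional to $\exp\{h_{\varphi(x),\sigma(x),x}\}$ uniformly in $s(x)$, split the children according to whether the connecting edge lies on the $\mathbb Z$-path, and normalize by the reference state $(b,q)$. Your handling of the two cases for $\langle x_\downarrow,x\rangle$, of the $\theta$- versus $\zeta,\eta$-factors, and of the index ranges matches the statement, so there is no substantive difference from the paper's (cited) proof.
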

\begin{proof} Similar to the proof of
	Theorem 5.1 in \cite{R}, see also the proof of Theorem 1 in \cite{RR}.
\end{proof}

This is $4q-2$ dimensional non-liner system of functional equation. The unknown functions are defined on vertices of the tree and take strictly positive real values.

\subsection{Constant unknown functions} 
 
 In general, it is very difficult to find
solutions of the system (\ref{B}), (\ref{R}), (\ref{Bz}), (\ref{Rz}) . Therefore one can solve it in class of translation invariant (constant) functions. That is we assume that our unknown functions do not depend on the vertices of tree:
$$z_{b,i,x}\equiv u_i, \ \ \mbox{for all} \ \ x\in V, \ \ i\in \Phi_{q-1};$$ 
$$z_{r,i,x}\equiv v_i, \ \ \mbox{for all} \ \ x\in V, \ \ i\in \Phi_{q};$$ 
$$\hat z_{b,i,x}\equiv \hat u_i, \ \ \mbox{for all} \ \ x\in V, \ \ i\in \Phi_{q-1};$$ 
$$\hat z_{r,i,x}\equiv \hat v_i, \ \ \mbox{for all} \ \ x\in V, \ \ i\in \Phi_{q}.$$ 
Then system (\ref{B}), (\ref{R}), (\ref{Bz}), (\ref{Rz}) is reduced to 
\begin{equation}\label{Bt}
	u_i=\left({1+\eta^{R(i,q)}\hat v_q+\sum_{j=1}^{q-1}\left(\hat u_j+\eta^{R(i,j)}\hat v_j\right)\over 
		1+\eta^{R(q,q)}\hat v_q+\sum_{j=1}^{q-1}\left(\hat u_j+
		\eta^{R(q,j)}\hat v_j\right)}\right)^2\left({1+v_q+\sum_{j=1}^{q-1}\left(\theta^{\delta_{ij}}u_j+v_j\right)\over 
		\theta+v_q+\sum_{j=1}^{q-1}\left(u_j+v_j\right)}\right)^{k-2}, \ i\in \Phi_{q-1};
\end{equation}
\begin{equation}\label{Rt}
	v_i=\left({\zeta^{B(i,q)}+\hat v_q+\sum_{j=1}^{q-1}
		\left(\zeta^{B(i,j)}\hat u_j+\hat v_j\right)\over 
		1+\eta^{R(q,q)}\hat v_q+\sum_{j=1}^{q-1}\left(\hat u_j+
	\eta^{R(q,j)}\hat v_j\right)}\right)^2\left({1+v_q+\sum_{j=1}^{q-1}\left(u_j+\theta^{\delta_{ij}}v_j\right)\over 
	\theta+v_q+\sum_{j=1}^{q-1}\left(u_j+v_j\right)}\right)^{k-2}, \  i\in \Phi_{q}.
\end{equation}
\begin{equation}\label{Btz}
	\hat u_i=\left({1+\eta^{R(i,q)}\hat v_q+\sum_{j=1}^{q-1}\left(\hat u_j+\eta^{R(i,j)}\hat v_j\right)\over 
		1+\eta^{R(q,q)}\hat v_q+\sum_{j=1}^{q-1}\left(\hat u_j+
		\eta^{R(q,j)}\hat v_j\right)}\right)\left({1+v_q+\sum_{j=1}^{q-1}\left(\theta^{\delta_{ij}}u_j+v_j\right)\over 
		\theta+v_q+\sum_{j=1}^{q-1}\left(u_j+v_j\right)}\right)^{k-1}, \ i\in \Phi_{q-1};
\end{equation}
\begin{equation}\label{Rtz}
	\hat v_i=\left({\zeta^{B(i,q)}+\hat v_q+\sum_{j=1}^{q-1}
		\left(\zeta^{B(i,j)}\hat u_j+\hat v_j\right)\over 
		1+\eta^{R(q,q)}\hat v_q+\sum_{j=1}^{q-1}\left(\hat u_j+
		\eta^{R(q,j)}\hat v_j\right)}\right)\left({1+v_q+\sum_{j=1}^{q-1}\left(u_j+\theta^{\delta_{ij}}v_j\right)\over 
		\theta+v_q+\sum_{j=1}^{q-1}\left(u_j+v_j\right)}\right)^{k-1}, \  i\in \Phi_{q}.
\end{equation}
Now one can choose concrete functions $B$ and $R$ and then try to solve corresponding system of equations (\ref{Bt}), (\ref{Rt}), (\ref{Btz}), (\ref{Rtz}). 

For the Bubble coalescence model it seems reasonable to take 
these functions as
\begin{equation}\label{fc} 
B(i,j)=|i-j|, \ \ R(i,j)=1-\delta_{ij}.
\end{equation}
Then the system is simplified as
 \begin{equation}\label{Btq}
 	u_i=\left({1+\eta \hat v_q+\sum_{j=1}^{q-1}\left(\hat u_j+\eta^{1-\delta_{ij}}\hat v_j\right)\over 
 		1+\hat v_q+\sum_{j=1}^{q-1}\left(\hat u_j+
 		\eta \hat v_j\right)}\right)^2\left({1+v_q+\sum_{j=1}^{q-1}\left(\theta^{\delta_{ij}}u_j+v_j\right)\over 
 		\theta+v_q+\sum_{j=1}^{q-1}\left(u_j+v_j\right)}\right)^{k-2}, \ i\in \Phi_{q-1};
 \end{equation}
 \begin{equation}\label{Rtq}
 	v_i=\left({\zeta^{q-i}+\hat v_q+\sum_{j=1}^{q-1}
 		\left(\zeta^{|i-j|}\hat u_j+\hat v_j\right)\over 
 		1+\hat v_q+\sum_{j=1}^{q-1}\left(\hat u_j+
 		\eta \hat v_j\right)}\right)^2\left({1+v_q+\sum_{j=1}^{q-1}\left(u_j+\theta^{\delta_{ij}}v_j\right)\over 
 		\theta+v_q+\sum_{j=1}^{q-1}\left(u_j+v_j\right)}\right)^{k-2}, \ \ i\in \Phi_{q}.
 \end{equation}
 \begin{equation}\label{Btqz}
	\hat u_i=\left({1+\eta \hat v_q+\sum_{j=1}^{q-1}\left(\hat u_j+\eta^{1-\delta_{ij}}\hat v_j\right)\over 
		1+\hat v_q+\sum_{j=1}^{q-1}\left(\hat u_j+
		\eta \hat v_j\right)}\right)\left({1+v_q+\sum_{j=1}^{q-1}\left(\theta^{\delta_{ij}}u_j+v_j\right)\over 
		\theta+v_q+\sum_{j=1}^{q-1}\left(u_j+v_j\right)}\right)^{k-1}, \ i\in \Phi_{q-1};
\end{equation}
\begin{equation}\label{Rtqz}
	\hat v_i=\left({\zeta^{q-i}+\hat v_q+\sum_{j=1}^{q-1}
		\left(\zeta^{|i-j|}\hat u_j+\hat v_j\right)\over 
		1+\hat v_q+\sum_{j=1}^{q-1}\left(\hat u_j+
		\eta \hat v_j\right)}\right)\left({1+v_q+\sum_{j=1}^{q-1}\left(u_j+\theta^{\delta_{ij}}v_j\right)\over 
		\theta+v_q+\sum_{j=1}^{q-1}\left(u_j+v_j\right)}\right)^{k-1}, \ \ i\in \Phi_{q}.
\end{equation}
For simplicity we consider the case $q=2$, meaning, for example,  that 
$$1=A+T \ \ \mbox{and} \ \ 2= C+G.$$ 
Then from system (\ref{Btq}), (\ref{Rtq}), (\ref{Btqz}), (\ref{Rtqz}) we get ($u=u_1$, $v=v_1$, $w=v_2$, $\hat u=\hat u_1$, $\hat v=\hat v_1$, $\hat w=\hat v_2$):
\begin{equation}\label{BR2}
	\begin{array}{lll}
	u=({1+\hat u+\hat v+\eta \hat w\over 1+\hat u+\eta \hat v+\hat w})^2({1+\theta u+v+w\over \theta+u+v+w})^{k-2}\\[2mm]
	v=({\zeta+\hat u+\hat v+\hat w\over 1+\hat u+\eta \hat v+\hat w})^2({1+ u+\theta v+w\over \theta+u+v+w})^{k-2}\\[2mm]
	w=({1+\zeta \hat u+\hat v+\hat w\over 1+\hat u+\eta \hat v+\hat w})^2({1+ u+v+\theta w\over \theta+u+v+w})^{k-2},
	\end{array}
\end{equation}
\begin{equation}\label{BR2z}
	\begin{array}{lll}
		\hat u=({1+\hat u+\hat v+\eta \hat w\over 1+\hat u+\eta \hat v+\hat w})({1+\theta u+v+w\over \theta+u+v+w})^{k-1}\\[2mm]
		\hat v=({\zeta+\hat u+\hat v+\hat w\over 1+\hat u+\eta \hat v+\hat w})({1+ u+\theta v+w\over \theta+u+v+w})^{k-1}\\[2mm]
		\hat w=({1+\zeta \hat u+\hat v+\hat w\over 1+\hat u+\eta \hat v+\hat w})({1+ u+v+\theta w\over \theta+u+v+w})^{k-1},
	\end{array}
\end{equation}
where $\theta>1$, $0<\zeta<1$, $\eta>1$ and $u,v,w>0$.

To solve this system we note that this is fixed point equation for the operator $F: \mathbb R^6_+\to \mathbb R^6_+$ defined by 
\begin{equation}\label{BRo}
	F:\ 	\begin{array}{llllll}
			u'=({1+\hat u+\hat v+\eta \hat w\over 1+\hat u+\eta \hat v+\hat w})^2({1+\theta u+v+w\over \theta+u+v+w})^{k-2}\\[2mm]
			v'=({\zeta+\hat u+\hat v+\hat w\over 1+\hat u+\eta \hat v+\hat w})^2({1+ u+\theta v+w\over \theta+u+v+w})^{k-2}\\[2mm]
			w'=({1+\zeta \hat u+\hat v+\hat w\over 1+\hat u+\eta \hat v+\hat w})^2({1+ u+v+\theta w\over \theta+u+v+w})^{k-2},\\[2mm]
			\hat u'=({1+\hat u+\hat v+\eta \hat w\over 1+\hat u+\eta \hat v+\hat w})({1+\theta u+v+w\over \theta+u+v+w})^{k-1}\\[2mm]
		\hat v'=({\zeta+\hat u+\hat v+\hat w\over 1+\hat u+\eta \hat v+\hat w})({1+ u+\theta v+w\over \theta+u+v+w})^{k-1}\\[2mm]
		\hat w'=({1+\zeta \hat u+\hat v+\hat w\over 1+\hat u+\eta \hat v+\hat w})({1+ u+v+\theta w\over \theta+u+v+w})^{k-1}.
	\end{array}
\end{equation}
Define 
$$M=\{(u, v, w, \hat u, \hat v, \hat w)\in \mathbb R^6_+: u=1, v=w, \hat u=1, \hat v=\hat w\}.$$
\begin{lemma}
The set $M$ is invariant with respect to $F$.
\end{lemma}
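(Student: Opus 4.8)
The plan is to verify directly that $F$ maps $M$ into itself, by substituting the defining relations of $M$ into the formulas \eqref{BRo}. First I would take an arbitrary point $(u,v,w,\hat u,\hat v,\hat w)\in M$, that is, with $u=\hat u=1$, $v=w$ and $\hat v=\hat w$, and write out the six coordinates $u',v',w',\hat u',\hat v',\hat w'$ of its image under $F$.

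The argument then rests on three elementary cancellations. First, because $\hat v=\hat w$ we have
\[
\frac{1+\hat u+\hat v+\eta\hat w}{1+\hat u+\eta\hat v+\hat w}=\frac{1+\hat u+(1+\eta)\hat v}{1+\hat u+(1+\eta)\hat v}=1,
\]
so this factor drops out of all six equations; and because $u=1$ we have $\frac{1+\theta u+v+w}{\theta+u+v+w}=\frac{1+\theta+v+w}{\theta+1+v+w}=1$, which is the only remaining factor in the equations for $u'$ and $\hat u'$. Hence $u'=\hat u'=1$. Second, because $\hat u=1$, the numerators $\zeta+\hat u+\hat v+\hat w$ and $1+\zeta\hat u+\hat v+\hat w$ of the first factors in the equations for $v',\hat v'$ and for $w',\hat w'$ are both equal to $\zeta+1+\hat v+\hat w$, and their denominators already coincide. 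Third, because $v=w$, the numerators $1+u+\theta v+w$ and $1+u+v+\theta w$ of the second factors in those same equations are both equal to $1+u+(1+\theta)v$, again with equal denominators. Combining the last two observations yields $v'=w'$ and $\hat v'=\hat w'$, so $F(u,v,w,\hat u,\hat v,\hat w)\in M$, which is the claim.

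I expect no real obstacle here: the statement is pure bookkeeping, and the only thing worth flagging is the bookkeeping itself, namely which of the three relations $u=\hat u=1$, $v=w$, $\hat v=\hat w$ produces each cancellation (the relation $\hat v=\hat w$ kills the common first factor; $u=1$ kills the second factor in the equations for $u'$ and $\hat u'$; $\hat u=1$ equates the two $\zeta$-numerators; and $v=w$ equates the two $\theta$-numerators). As a by-product, this computation identifies the restriction $F|_M$ as a map in the two variables $v$ and $\hat v$ alone (with the constant coordinates $u=\hat u=1$), which is exactly the reduction exploited in the next subsection.
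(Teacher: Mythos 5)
Your verification is correct and matches the paper's proof, which simply asserts that the claim is "straightforward to see" by direct substitution; you have carried out exactly that substitution, and the four cancellations you identify (from $\hat v=\hat w$, $u=1$, $\hat u=1$, and $v=w$ respectively) are all accurate. Nothing further is needed.
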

\begin{proof}
	It is straightforward to see that if $(u, v, w, \hat u, \hat v, \hat w)\in M$ then $(u', v', w', \hat u', \hat v', \hat w')\in M$, i.e., $F(M)\subset M$.  
\end{proof}

Let us reduce operator $F$ on the invariant set $M$, then the fixed points on $M$ are given by the solutions of the following system 
\begin{equation}\label{MR}
	 	\begin{array}{ll}
		v=({\zeta+1+2\hat v\over 2+(1+\eta) \hat v})^2({2+ (1+\theta) v\over \theta+1+2v})^{k-2}\\[2mm]
\hat v=({\zeta+1+2\hat v\over 2+(1+\eta) \hat v})({2+ (1+\theta) v\over \theta+1+2v})^{k-1}.
\end{array}\end{equation}
From the second equation of this system we get
$${\zeta+1+2\hat v\over 2+(1+\eta) \hat v}=\hat v\, ({2+ (1+\theta) v\over \theta+1+2v})^{1-k}.$$
Substituiting this in the first equation of (\ref{MR}) we obtain 
\begin{equation}\label{hat}
	\hat v =\sqrt{v}\, ({2+ (1+\theta) v\over \theta+1+2v})^{k/2}.
\end{equation}
Consequently, from the first equation of (\ref{MR}) we get 
\begin{equation}\label{ff}
	v=f(v):=f(v,\theta, \eta,\zeta,k),
\end{equation}	
	where 
	$$f(v,\theta, \eta,\zeta,k):=
\left({(\zeta+1)(\theta+1+2v)^{k/2}+2\sqrt{v} (2+ (1+\theta) v)^{k/2}\over 2(\theta+1+2v)^{k/2}+(1+\eta) \sqrt{v} (2+ (1+\theta) v)^{k/2}}\right)^2\left({2+ (1+\theta) v\over \theta+1+2v}\right)^{k-2}.$$
This is very complicated equation depending on four parameters:  $$k\geq 2, \ \ \theta>1, \ \ \zeta\in (0,1), \ \ \eta>1.$$
But our reduction the system to equation (\ref{ff}) with one unknown is very useful to solve the system numerically: one can take concrete values of parameters and then a computer gives all corresponding solutions.  

We are interested to the values of parameters when the equation (\ref{ff}) has more than one solutions. Since the problem is very difficult, we choose concrete values of parameters as 
\begin{equation}\label{tze}
	\theta=3, \ \ \zeta=0.5, \ \ \eta=1.05
	\end{equation}
and change values of 
$k=2,3,4,5,6,7.$ The  table (see Fig.\ref{ft}) shows solutions to (\ref{MR}) corresponding to numerical solution of (\ref{ff}) and putting it in (\ref{hat}).
\begin{rk} 	
	\begin{itemize}
	\item For fixed $\beta={1\over T}$ (i.e., fixed temperature) condition (\ref{tze}) is a condition on parameters of the model:
		$$J=T\log(3)>0, \ \ J_b=T\log(0.5)<0, \ \ J_r=T\log(1.05)>0.$$
	
	\item Numerical analysis of system (\ref{MR}) for the case $k\geq 8$ also shows that for the fixed parameters $\theta=3, \ \zeta=0.5, \ \eta=1.05$, there are exactly 3 positive solutions. Minimal solution goes to zero, maximal solution goes to infinity when $k\to \infty$, for example, when $k=20$ we have the following exactly three values of $v$:
	$$0.0000021459,\ \ 1.054575702, \ \ 249485.2116$$
	 Since non-uniqueness appears when $k\geq 6$, there is no any hope to show this analytically.  But numerical results  which we have for $k=2,\dots,7$  are  already
interesting enough to see biological interpretations of our results. 
\end{itemize}
\end{rk}
\begin{figure}
	\includegraphics[width=13cm]{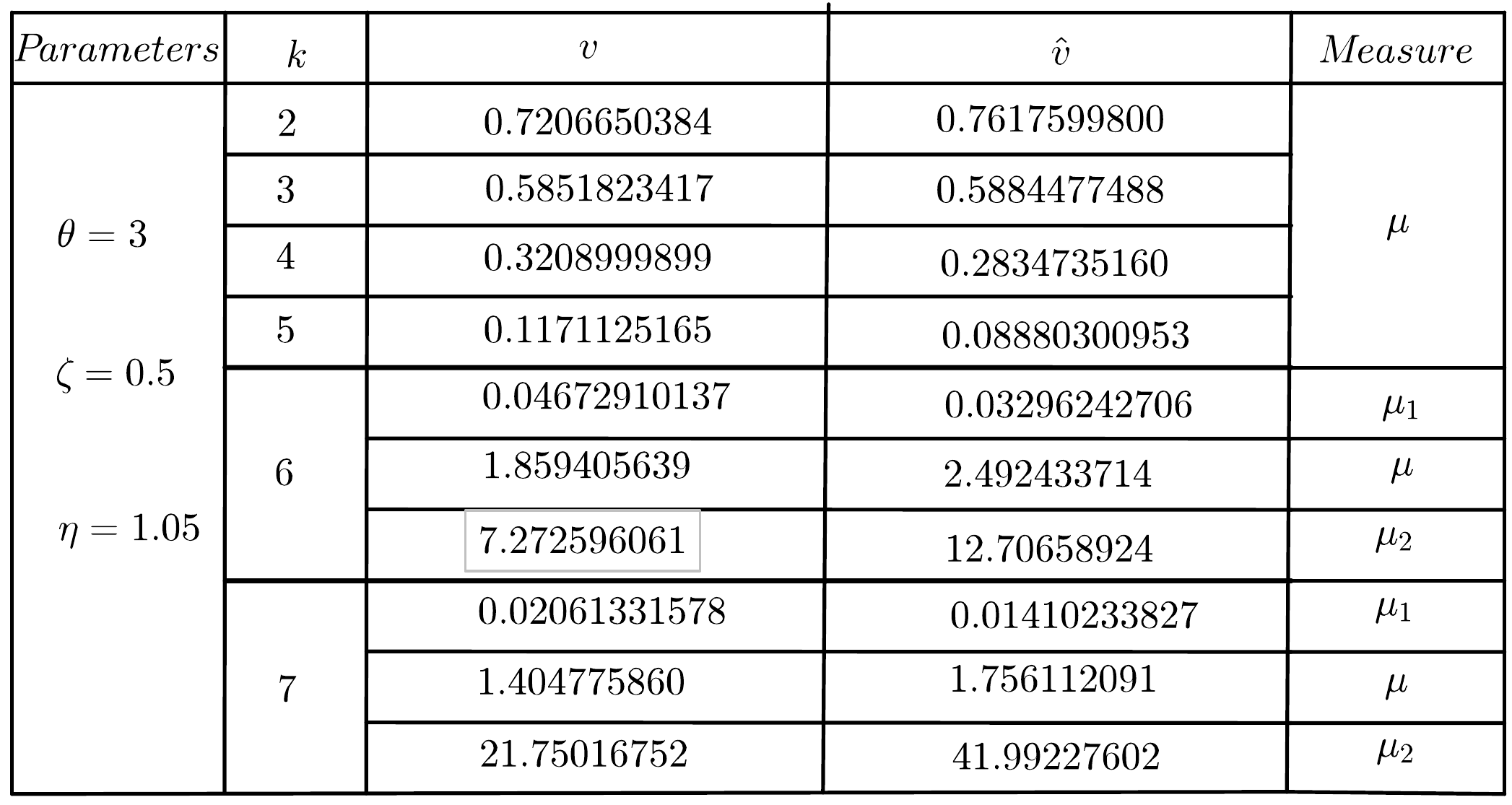}\\
	\caption{The approximate values of solutions $v$ and $\hat v$ corresponding to concrete values of parameters as 
		$\theta=3, \ \ \zeta=0.5, \ \ \eta=1.05$ and for $k=2,3,4,5,6,7.$ In case of $k=2,3,4,5$ there is unique solution. But for $k=6$ and $k=7$ there are exactly three solutions. The corresponding measures are denoted as in the last column.}\label{ft}
\end{figure}

 \subsection{Markov chains corresponding to solutions given in the  table}

We note that the solutions \begin{equation}\label{zab}\begin{array}{ll}
		z_{a,i,x}, a\in \Psi, i\in \Phi_q, \, \langle x_\downarrow, x\rangle\notin \mathbb Z-{\rm path}; \\[2mm]
		\hat z_{a,i,x}, a\in \Psi, i\in \Phi_q, \, \langle x_\downarrow, x\rangle\in \mathbb Z-{\rm path}.
	\end{array}
\end{equation} define a boundary
law (\cite{BR}, \cite[Chapter 12]{Ge}) of the biological system of DNAs.

For marginals on an edge $l=\langle x,y\rangle$,  
considering a boundary law  we have in the case of Hamiltonian (\ref{h}) that
$$\mu[s(x)=(a,i), s(y)=(c,j)]= \frac{1}{Z} z_{a,i} \exp(f_{xy}(s(x),s(y)) z_{c,j},$$
where $Z$ is normalizing factor and $z_{a,x}$ takes values depending on
the relation of $l$ to ${\mathbb Z}- {\rm path}$.

From this,  the relation between the boundary law and the transition matrix
for the associated tree-indexed Markov chain (Gibbs measure) is  obtained from the formula of the conditional probability.

Now we are interested to study thermodynamics of 
bubble  coalescence corresponding to solutions given in the table. For this reason we study Markov chains on the sub-tree
consisting edges which are not a $\mathbb Z$-path and separately Markov chains on the $\mathbb Z$-paths. 

Here we define these two Markov chains.
For $q=2$, for simplify of notations,  let us denote the single-site values of the 
configuration as 
\begin{equation}\label{sn}
1:=(b,1), \ \ 2:=(b,2), \ \ 3:=(r,1), \ \ 4:=(r,2).
\end{equation}
\begin{itemize}
	\item Since our solutions do not depend on vertices we define tree-indexed homogeneous Markov chain with states
	$\{1,2,3,4\}$ (defined in (\ref{sn})) with transition matrix $\mathbf P=\left(P_{ij}\right)$,
	where $P_{ij}$ is the probability to go from a state $i$ at a vertex to a state $j$ at the neighboring vertex of the tree.
	Using solutions $(u, v, w, \hat u, \hat v, \hat w)$  we write the matrices (if the edge does not belong to ${\mathbb Z}-{\rm path}$):
	$$\mathbf P=\left(\begin{array}{cccc}
		{\theta u\over Z_1}& {1\over Z_1}& {v\over Z_1}& {w\over Z_1}\\[2mm]
			{u\over Z_2}& {\theta\over Z_2}& {v\over Z_2}& {w \over Z_2}\\[2mm]
		{u\over Z_3}& {1\over Z_3}& {\theta v\over Z_3}& {w\over Z_3}\\[2mm]
		{u\over Z_4}& {1\over Z_4}& {v\over Z_4}& {\theta w\over Z_4}
		\end{array}\right).
	$$
	Here $Z_1=1+\theta u+v+w$,	$Z_2=\theta+u+v+w$, $Z_3=1+ u+\theta v+w$, $Z_4=1+u+ v+\theta w$,.
	
	\item Define tree-indexed Markov chain on $\mathbb Z$-path:
	$$\mathbf Q=\left(\begin{array}{cccc}
		{\hat u\over Y_1}& {1\over Y_1}& {\hat v\over Y_1}& {\eta \hat w\over Y_1}\\[2mm]
			{\hat u\over Y_2}& {1\over Y_2}& {\eta \hat v\over Y_2}& {\hat w\over Y_2}\\[2mm]
		{\hat u\over Y_3}& {\zeta\over Y_3}& {\hat v\over Y_3}& {\hat w\over Y_3}\\[2mm]
		{\zeta \hat u\over Y_4}& {1\over Y_4}& {\hat v\over Y_4}& {\hat w\over Y_4}
	\end{array}\right).
	$$
	Here $Y_1=1+\hat u+\hat v+\eta \hat w$, $Y_2=1+\hat u+\eta\hat v+\hat w$, $Y_3=\zeta+\hat u+\hat v+\hat w$,
	$Y_4=1+\zeta\hat u+\hat v+\hat w$.\\
	
Compute matrices 	$\mathbf P$ and $\mathbf Q$ for 
the case $k=6$ and 	$\theta=3, \ \ \zeta=0.5, \ \ \eta=1.05$. Since for these values of parameters we have three solutions, denote by 	$\mathbf P_i$ and $\mathbf Q_i$, $i=1,2,3$ the corresponding matrices. Denote by $p_i$ (resp. $q_i$) the stationary probability vector of $\mathbf P_i$ (resp. $\mathbf Q_i$). 

In the case of non-uniqueness of Gibbs measure
(and corresponding Markov chains) we have different stationary states for different measures. 
Using the values given in the Table we get:

{\bf Case of measure $\mu_1$:}
$$\mathbf P_1= \left(\begin{array}{cccc}
	0.733& 0.244& 0.0115&0.0115\\[2mm]
		0.244&0.733&0.0115&0.0115\\[2mm]
	0.457& 0.457& 0.064& 0.022\\[2mm]
	0.457& 0.457& 0.022& 0.064
		\end{array}\right), \ \ \mathbf Q_1= \left(\begin{array}{cccc}
		0.484& 0.484& 0.015&0.017\\[2mm]
			0.484&0.484&0.017&0.015\\[2mm]
		0.638& 0.319& 0.021& 0.021\\[2mm]
		0.319& 0.638& 0.021& 0.021
	\end{array}\right),
	$$
$$p_1=(0.488, 0.488, 0.012, 0.012), \ \ q_1=(0.484, 0.484, 0.016, 0.016).
$$
{\bf Case of measure $\mu$:}
$$\mathbf P_2= \left(\begin{array}{cccc}
	0.389& 0.129& 0.241&0.241\\[2mm]
		0.129&0.389&0.241&0.241\\[2mm]
	0.106& 0.106& 0.591& 0.197\\[2mm]
	0.106& 0.106& 0.197& 0.591
\end{array}\right), \ \ \mathbf Q_2= \left(\begin{array}{cccc}
	0.141& 0.141& 0.350&0.368\\[2mm]
		0.141&0.141&0.368&0.350\\[2mm]
	0.154& 0.078& 0.384& 0.384\\[2mm]
	0.078& 0.154& 0.384& 0.384
\end{array}\right),
$$
$$p_2=(0.153, 0.153, 0.347, 0.347), \ \ q_2=(0.122, 0.122, 0.378, 0.378).
$$
{\bf Case of measure $\mu_2$:}
$$\mathbf P_3= \left(\begin{array}{cccc}
	0.162& 0.054& 0.392&0.392\\[2mm]
		0.054&0.162&0.392&0.392\\[2mm]
	0.032& 0.032& 0.702& 0.234\\[2mm]
	0.032& 0.032& 0.234& 0.702
\end{array}\right), \ \ \mathbf Q_3= \left(\begin{array}{cccc}
	0.036& 0.036& 0.453&0.475\\[2mm]
		0.036&0.036&0.475&0.453\\[2mm]
	0.037& 0.019& 0.472& 0.472\\[2mm]
	0.019& 0.037& 0.472& 0.472
\end{array}\right),
$$
$$p_3=(0.038, 0.038, 0.462, 0.462), \ \ q_3=(0.028, 0.028, 0.472, 0.472).
$$

\end{itemize}

The following is known (see \cite[p.55]{Ge}) as ergodic theorem for positive stochastic matrices.
\begin{thm}\label{to} Let $\mathbb P$ be a positive stochastic matrix and $\pi$
	the unique probability vector with $\pi \mathbb P=\pi$ (i.e.  $\pi$ is stationary distribution). Then
	$$\lim_{n\to \infty} x\mathbb P^n =\pi$$
	for all initial vector $x$.
\end{thm}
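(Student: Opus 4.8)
The plan is to present this as a Banach fixed point statement: I will show that $\mathbb P$, acting on the right on row vectors, is a strict contraction of the probability simplex
\[
\Delta=\Bigl\{x\in\mathbb R^m_{\ge 0}\colon\ \textstyle\sum_{i=1}^m x_i=1\Bigr\}
\]
(where $m$ denotes the size of $\mathbb P$) in the metric $d(x,y)=\|x-y\|_1=\sum_i|x_i-y_i|$, and then invoke the contraction mapping principle. First I would record two elementary facts: $\Delta$ is a nonempty, closed, bounded -- hence complete -- subset of $\mathbb R^m$; and $\mathbb P$ maps $\Delta$ into itself, because for $x\in\Delta$ one has $(x\mathbb P)_j=\sum_i x_iP_{ij}\ge 0$ and $\sum_j(x\mathbb P)_j=\sum_i x_i\sum_j P_{ij}=\sum_i x_i=1$ since $\mathbb P$ is stochastic. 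Observe that the argument will reprove existence and uniqueness of $\pi$ along the way, so there is no circularity with the hypothesis.

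The key step is the contraction estimate. Set $\delta:=\min_{i,j}P_{ij}$, which is strictly positive by hypothesis, and for each column $j$ put $\nu_j:=\min_i P_{ij}\ge\delta$. Decompose $P_{ij}=\nu_j+r_{ij}$ with $r_{ij}\ge 0$; summing over $j$, every row obeys $\sum_j r_{ij}=1-\varepsilon$, where $\varepsilon:=\sum_j\nu_j$ is independent of $i$ and satisfies $m\delta\le\varepsilon\le 1$. Now fix $x,y\in\Delta$. Because $\sum_i(x_i-y_i)=0$, the common part $\nu_j$ cancels:
\[
\bigl((x-y)\mathbb P\bigr)_j=\nu_j\sum_i(x_i-y_i)+\sum_i(x_i-y_i)r_{ij}=\sum_i(x_i-y_i)r_{ij},
\]
so by the triangle inequality and $\sum_j r_{ij}=1-\varepsilon$,
\[
\|(x-y)\mathbb P\|_1\le\sum_i|x_i-y_i|\sum_j r_{ij}=(1-\varepsilon)\|x-y\|_1\le(1-m\delta)\|x-y\|_1 .
\]
Since $\delta>0$ and $\delta\le 1/m$ (a row minimum cannot exceed the row average $1/m$), the constant $\lambda:=1-m\delta$ lies in $[0,1)$, and $\mathbb P\colon(\Delta,\|\cdot\|_1)\to(\Delta,\|\cdot\|_1)$ is a contraction with ratio $\lambda<1$.

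To conclude, the Banach fixed point theorem gives a unique fixed point of $\mathbb P$ in $\Delta$, which is exactly the unique probability vector $\pi$ with $\pi\mathbb P=\pi$, and for every $x\in\Delta$,
\[
\|x\mathbb P^n-\pi\|_1=\|x\mathbb P^n-\pi\mathbb P^n\|_1\le\lambda^n\|x-\pi\|_1\longrightarrow 0\qquad(n\to\infty),
\]
hence $x\mathbb P^n\to\pi$, and in fact geometrically fast. (If $x\in\mathbb R^m$ is arbitrary, then $x\mathbb P^n$ has coordinate sum $\sum_i x_i$ for every $n$, and one checks likewise that $x\mathbb P^n\to(\sum_i x_i)\pi$; the statement above is the case $\sum_i x_i=1$.) I do not expect a genuine obstacle -- the result is classical; the only point requiring care is the ``common minorant'' splitting $P_{ij}=\nu_j+r_{ij}$ together with the cancellation $\sum_i(x_i-y_i)=0$, which is precisely what converts strict positivity of $\mathbb P$ into a uniform contraction. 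As an alternative one could cite the Perron--Frobenius theorem: a strictly positive stochastic matrix is primitive with spectral radius $1$, the eigenvalue $1$ is simple (left eigenvector $\pi$, right eigenvector the all-ones vector), and all other eigenvalues have modulus $<1$, whence $\mathbb P^n\to\mathbf 1\,\pi$ and $x\mathbb P^n\to(x\mathbf 1)\pi=\pi$; but the contraction proof is shorter and self-contained.
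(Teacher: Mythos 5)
Your proof is correct. Note that the paper does not prove this theorem at all: it is quoted as a known ergodic theorem with only a citation to Georgii's book, so there is no in-paper argument to compare against. Your Doeblin-type minorization $P_{ij}=\nu_j+r_{ij}$ with $\nu_j=\min_i P_{ij}$, combined with the cancellation $\sum_i(x_i-y_i)=0$ on the simplex, is the standard self-contained route and every step checks out: $\Delta$ is complete in $\|\cdot\|_1$, $\mathbb P$ preserves it, the contraction ratio $1-\varepsilon$ with $\varepsilon=\sum_j\nu_j\ge m\delta>0$ is valid (and your observation that $\delta\le 1/m$ guarantees $1-m\delta\ge 0$ is the right sanity check), and Banach's theorem then delivers both the uniqueness of $\pi$ and the geometric convergence. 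Your parenthetical treatment of a general initial vector $x$ with $\sum_i x_i\ne 1$ is also sound, since $x\mathbb P^n-(\sum_i x_i)\pi$ has zero coordinate sum and the same cancellation applies; in the context of the paper only probability vectors are ever used, so the main case suffices. The alternative Perron--Frobenius route you sketch would work equally well but, as you say, is less elementary.
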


\subsection{Biological interpretations.}

Recall that a DNA is a configuration on a $\mathbb Z$-path.
By our construction only neighboring DNAs may interact.
The interaction is through an edge $l=\langle x, y\rangle\notin \mathbb Z$-path connecting two DNAs only when
configuration on this endpoints of the edge satisfy $\varphi(x)=\varphi(y)$, i.e., endpoints have the same color.

As a corollary of Theorem \ref{to} and above formulas of matrices and stationary distributions we obtain
the following biological interpretations: 

{\bf Case $\mu_1$:} {\bf No bubble coalescence:} With respect to measure $\mu_1$ of the BCI-DNA model on the Cayley tree of order 6 we have the following equilibrium state:
\begin{itemize}
\item two neighboring DNAs have junction of neighboring barrier zones with probability 0.976 (where states 1 and 2 seen with equal probability 0.488); they have junction of soft zones with probability 0.024 (where states 3 and 4 seen with equal probability 0.012). 

\item In a DNA the barrier zones seen with probability 0.968 (where states 1 and 2 have equal probability 0.484) and soft zones seen with probability 0.032 (where states 3 and 4 have equal probability 0.016).  
\end{itemize} 

{\bf Case $\mu$:} {\bf Domination of soft zone:} With respect to measure $\mu$ of the BCI-DNA model on the Cayley tree of order 6 we have the following equilibrium state:
\begin{itemize}
	\item two neighboring DNAs have junction of neighboring barrier zones with probability 0.306 (where states 1 and 2 have equal probability 0.153); they have junction of soft zones with probability 0.694 (where states 3 and 4 with probability 0.347). 
	
	\item In a DNA the barrier zones seen with probability 0.244 (where states 1 and 2 seen with probability 0.122) and soft zones seen with probability 0.756 (where states 3 and 4 have  probability 0.378).  
	\end{itemize} 

{\bf Case $\mu_2$:} {\bf Bubble coalescence:} With respect to measure $\mu_2$ of the BCI-DNA model on the Cayley tree of order 6 we have the following equilibrium state:
\begin{itemize}
	\item two neighboring DNAs have junction of neighboring barrier zones with probability 0.076 (where states 1 and 2 have probability 0.038); they have junction of soft zones with probability 0.924 (where state 3 and 4 have probability 0.462). 
	
	\item In a DNA the barrier zones seen with probability 0.056 (where states 1 and 2 with probability 0.028) and soft zones seen with probability 0.944 (where states 3 and 4 have probability 0.472).  
\end{itemize} 

\begin{rk} We note that above mentioned three equilibrium states of the BCI-DNA model are considered as coexistence of three phases: ``No bubble coalescence",  ``Dominated soft zone", ``Bubble coalescence". Since our measures are translation-invariant and each DNA has a countable set of neighbor DNAs, at the same temperature, each DNA interacts with several of its neighbors. DNAs having junctions (Holliday junction \cite{Rb}, \cite{Rp}) can be considered  as a
branched DNA. In case of coexistence more than one Gibbs measures, branches of a DNA can consist	different phases and different stationary states.
\end{rk}

\section{BCC-DNA model}

In this case boundary law equation (cf. with Theorem \ref{ei}) has the following form
\begin{equation}\label{zc}
	\begin{array}{ll}
z_{b,i,x}=\prod_{y\in S(x)}{1+\eta^{R(i,q)} z_{r,q,y}+\sum_{j=1}^{q-1}\left(z_{b,j,y}+\eta^{R(i,j)} z_{r,j,y}\right)\over 
		1+\eta^{R(q,q)} z_{r,q,y}+\sum_{j=1}^{q-1}\left( z_{b,j,y}+\eta^{R(q,j)}z_{r,j,y}\right)}, \ \ i\in \Phi_{q-1},\\[3mm]
	z_{r,i,x}=\prod_{y\in S(x)}{\zeta^{B(i,q)}+ z_{r,q,y}+\sum_{j=1}^{q-1}
		\left(\zeta^{B(i,j)}z_{b,j,y}+ z_{r,j,y}\right)\over 
		1+\eta^{R(q,q)} z_{r,q,y}+\sum_{j=1}^{q-1}\left( z_{b,j,y}+\eta^{R(q,j)}z_{r,j,y}\right)}, \ \ i\in \Phi_q,
	\end{array}
		\end{equation}
	where $\zeta=e^{J_b\beta}$ and $\eta=e^{J_r\beta}$. 

For functions (\ref{fc}), assuming that unknown functions do not depend on vertices of tree, for $q=2$ reduce system (\ref{zc}) to the following 
\begin{equation}\label{tic}
	\begin{array}{lll}
	u=\left({1+u+v+\eta w\over 1+u+\eta v+w}\right)^k,\\[2mm]
    v=\left({\zeta+u+v+w\over 1+u+\eta v+w}\right)^k,\\[2mm]
    w=\left({1+\zeta u+v+w\over 1+u+\eta v+w}\right)^k.
\end{array}
	\end{equation}
Consider this system as fixed point equation for the operator $G$ defined by 
\begin{equation}\label{tico}
G:\ \	\begin{array}{lll}
		u'=\left({1+u+v+\eta w\over 1+u+\eta v+w}\right)^k,\\[2mm]
		v'=\left({\zeta+u+v+w\over 1+u+\eta v+w}\right)^k,\\[2mm]
		w'=\left({1+\zeta u+v+w\over 1+u+\eta v+w}\right)^k.
	\end{array}
\end{equation}

Denote by ${\rm Fix}(G)$ the set of all fixed points of $G$ and 
$$L=\{(u,v,w)\in \mathbb R^3_+: u=1, v=w\}.$$
\begin{lemma}\label{in} If $\zeta<1$, $\eta>1$ then for any $k\geq 1$ we have ${\rm Fix}(G)\subset L$.
\end{lemma}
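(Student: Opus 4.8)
The plan is to work directly with the three fixed-point equations defining ${\rm Fix}(G)$ and to extract sign information by comparing them pairwise. Let $(u,v,w)\in\mathbb{R}^3_+$ be a fixed point of $G$, so that
$$u=\left(\frac{1+u+v+\eta w}{1+u+\eta v+w}\right)^k,\qquad v=\left(\frac{\zeta+u+v+w}{1+u+\eta v+w}\right)^k,\qquad w=\left(\frac{1+\zeta u+v+w}{1+u+\eta v+w}\right)^k,$$
all numerators and denominators being strictly positive since $u,v,w>0$ and $\zeta,\eta>0$.

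First I would divide the third equation by the second: the common denominator $(1+u+\eta v+w)^k$ cancels, leaving
$$\frac{w}{v}=\left(\frac{1+\zeta u+v+w}{\zeta+u+v+w}\right)^k.$$
Since $(1+\zeta u+v+w)-(\zeta+u+v+w)=(1-\zeta)(1-u)$ and $\zeta<1$, the numerator and denominator on the right compare exactly as $1-u$ compares to $0$. Because $t\mapsto t^k$ is strictly increasing on $(0,\infty)$ with $t^k\gtrless 1\iff t\gtrless 1$ for $t>0$ and $k\ge 1$, this gives the equivalences
$$u<1\iff w>v,\qquad u=1\iff w=v,\qquad u>1\iff w<v.$$

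The second key step uses the first fixed-point equation. Here $(1+u+v+\eta w)-(1+u+\eta v+w)=(\eta-1)(w-v)$, and since $\eta>1$ the right-hand side of the $u$-equation is $\gtrless 1$ according as $w\gtrless v$; hence $w>v\implies u>1$, $w=v\implies u=1$, and $w<v\implies u<1$. Combining the two displays rules out $u\ne 1$: if $u<1$ then $w>v$, which forces $u>1$, a contradiction, and the case $u>1$ is symmetric. Therefore $u=1$, and then the displayed equivalences give $w=v$, i.e. $(u,v,w)\in L$.

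I do not expect a genuine obstacle here; the only point that requires noticing is that dividing the $v$- and $w$-equations cancels their shared denominator and converts the system into a clean sign comparison governed precisely by the hypotheses $\zeta<1$ and $\eta>1$. The argument is uniform in $k\ge 1$ because $t\mapsto t^k$ preserves the order relation to $1$ on the positive reals.
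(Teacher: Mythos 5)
Your proof is correct, and it rests on exactly the same two elementary computations as the paper's — namely that the numerator-minus-denominator differences in the relevant fractions are $(1-\zeta)(1-u)$ and $(\eta-1)(w-v)$ — but it exploits them by a different mechanism. The paper linearizes: it applies the factorization $a^k-b^k=(a-b)\sum_i a^{k-i}b^{i-1}$ to the differences of the fixed-point equations, obtaining a genuine $2\times 2$ linear system in the unknowns $u-1$ and $v-w$ with strictly positive coefficients $P,Q$, and concludes from $(u-1)\bigl[1+PQ(1-\zeta)(\eta-1)\bigr]=0$. You instead divide the $w$- and $v$-equations (legitimate since $v>0$ on $\mathbb{R}^3_+$) and use only the order-preserving property of $t\mapsto t^k$ relative to $1$, turning the system into the two trichotomies $u\lessgtr 1\iff w\gtrless v$ and $w\gtrless v\iff u\gtrless 1$, whose combination is contradictory unless $u=1$ and $v=w$. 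Your route is slightly more economical — it needs no factorization identity and no positivity bookkeeping for $P$ and $Q$ — while the paper's version is more quantitative, producing an explicit linear relation between $u-1$ and $v-w$ that could in principle be reused elsewhere. Both arguments are uniform in $k\geq 1$ and use the hypotheses $\zeta<1$, $\eta>1$ in the same essential way.
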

\begin{proof} Substracting 1 from the both sides of the first equation of (\ref{tic}) and substracting its second and third equations we get
	\begin{equation}\label{1v}
\left\{	\begin{array}{ll}
		(u-1)+P(\eta-1)(v-w)=0\\[2mm]
		Q(\zeta-1)(u-1)+(v-w)=0,
		\end{array}\right.
	\end{equation} 
where 
$$P={\sum_{i=1}^kA^{k-i}\over 1+u+\eta v+w}>0, \ \ Q={\sum_{i=1}^kB^{k-i}C^{i-1}\over 1+u+\eta v+w}>0$$
with
$$A={1+u+v+\eta w\over 1+u+\eta v+w}, \ \ B={\zeta+u+v+w\over 1+u+\eta v+w}, \ \ C={1+\zeta u+v+w\over 1+u+\eta v+w}.$$
From the second equation of (\ref{1v}) we get 
$v-w=Q(1-\zeta)(u-1)$ and substituting this to the first equation we get 
$$(u-1)[1+PQ(1-\zeta)(\eta-1)]=0.$$

This equality holds only for $u=1$, because $PQ>0$, $\zeta<1$. $\eta>1$. For $u=1$ from the second equation of (\ref{1v}) one gets $v=w$. 
\end{proof}

Thus all fixed points of the operator $G$ belong to $L$. 
Let us find fixed points of the operator $F$ on $L$. Then the system (\ref{tic}) is reduced to 
\begin{equation}\label{V}
	v=\left({\zeta+1+2v\over 2+(1+\eta)v}\right)^k.
	\end{equation}
\begin{lemma}\label{yag} For any $k\geq 1$, $\zeta\in (0,1)$, $\eta>1$ the equation (\ref{V}) has unique positive solution.
\end{lemma}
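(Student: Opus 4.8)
The plan is to reduce the equation (\ref{V}) to a polynomial equation through the substitution $t=v^{1/k}$, and then show that the resulting polynomial is strictly monotone, so it has exactly one positive root. First I would note that for every $v>0$ the quantity $r(v):=\frac{\zeta+1+2v}{2+(1+\eta)v}$ is a positive real number (numerator and denominator are positive), so taking positive $k$-th roots, (\ref{V}) is equivalent to $v^{1/k}=r(v)$. Setting $t=v^{1/k}$, which is a bijection of $(0,\infty)$ onto itself with inverse $v=t^{k}$, the equation becomes $t=r(t^{k})$; clearing the (positive) denominator $2+(1+\eta)t^{k}$ gives
\[
\psi(t):=(1+\eta)\,t^{k+1}-2t^{k}+2t-(\zeta+1)=0,\qquad t>0 .
\]
Thus it suffices to prove that $\psi$ has exactly one root in $(0,\infty)$.

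Existence is immediate: $\psi(0)=-(\zeta+1)<0$ while $\psi(t)\to+\infty$ as $t\to\infty$ since the leading term is $(1+\eta)t^{k+1}$, so the intermediate value theorem gives a positive root. For uniqueness the key step is to show $\psi'>0$ on $(0,\infty)$. For $k=1$ this is clear because then $\psi(t)=(1+\eta)t^{2}-(\zeta+1)$ and $\psi'(t)=2(1+\eta)t>0$. For $k\ge 2$ I would compute
\[
\psi'(t)=(1+\eta)(k+1)t^{k}-2kt^{k-1}+2,\qquad
\psi''(t)=kt^{k-2}\big[(1+\eta)(k+1)t-2(k-1)\big],
\]
so $\psi''<0$ on $(0,t^{*})$ and $\psi''>0$ on $(t^{*},\infty)$, where $t^{*}=\frac{2(k-1)}{(1+\eta)(k+1)}>0$; hence $\psi'$ attains its global minimum over $(0,\infty)$ at $t^{*}$. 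Using the identity $(1+\eta)(k+1)t^{*}=2(k-1)$ one gets
\[
\psi'(t^{*})=(t^{*})^{k-1}\big[(1+\eta)(k+1)t^{*}-2k\big]+2=2\big(1-(t^{*})^{k-1}\big),
\]
and since $\eta>1$ forces $t^{*}<\frac{k-1}{k+1}<1$, we have $(t^{*})^{k-1}<1$, so $\psi'(t^{*})>0$. Therefore $\psi'>0$ on all of $(0,\infty)$, $\psi$ is strictly increasing, and together with $\psi(0)<0<\psi(+\infty)$ this yields exactly one positive root; transporting back by $v=t^{k}$ gives the unique positive solution of (\ref{V}).

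The genuinely delicate point I expect is the uniqueness part. The naive attempt — showing that the right-hand side of (\ref{V}) divided by $v$ is strictly decreasing — actually fails for some admissible parameters: for large $k$ with $(1+\eta)(\zeta+1)$ slightly below $4$ that quotient develops a local minimum, so one really must pass to the polynomial $\psi$. What makes the proof go through is exactly the hypothesis $\eta>1$: it forces the critical point $t^{*}$ of $\psi'$ to lie strictly below $1$, which is precisely what makes $\psi'(t^{*})=2\big(1-(t^{*})^{k-1}\big)$ positive and hence $\psi$ globally monotone. The rest (the substitution, clearing denominators, the IVT) is routine.
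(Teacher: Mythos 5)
Your proof is correct and takes essentially the same route as the paper: the paper substitutes $x=v^{1/k}$, obtains the polynomial $g(x)=bx^{k+1}-x^{k}+x-a$ with $a=(1+\zeta)/2$, $b=(1+\eta)/2$ (which is exactly $\psi/2$ in your notation), and proves monotonicity by the same analysis of the critical point of $g'$, getting $g'(\hat x)=1-\hat x^{\,k-1}>0$. The only cosmetic difference is that the paper first confines the root to $(0,1)$ and checks monotonicity there, whereas you prove $\psi'>0$ on all of $(0,\infty)$.
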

\begin{proof} Introduce  
\begin{equation}\label{abx}a={1+\zeta\over 2}<1, \ \ b={1+\eta\over 2}>1, \ \ x=\sqrt[k]{v},
	\end{equation}
and rewrite (\ref{V}) as
\begin{equation}\label{Vy}
	x={a+x^k\over 1+bx^k}.
\end{equation}
By (\ref{abx}) it is easy to see that if $x$ is a positive solution to (\ref{Vy}) then $0<x<1$. 

Rewrite (\ref{Vy}) as
$$g(x):=bx^{k+1}-x^k+x-a=0.$$
We show that $g(x)$ has unique root in $(0,1)$.
Note that $g(0)=-a<0$, $g(1)=b-a>0$. Thus there is at least one root  $x_*\in (0,1)$ of $g(x)$. To show uniqueness of $x_*$ it suffices to show that $g(x)$ is increasing on $(0,1)$, i.e.,
$$g'(x)=b(k+1)x^k-kx^{k-1}+1>0, \ \ \forall x\in (0,1) .$$ 
We have $g'(0)=1>0$, $g'(1)=b(k+1)-k+1>0$. It remains to show that minimum of $g'(x)$ is also positive. We have 
$$g''(x)=b(k+1)kx^{k-1}-k(k-1)x^{k-2}=0 \ \ \Leftrightarrow \ \ 
x=\hat x:={k-1\over b(k+1)}\in (0,1).$$
Consequently,
$$g'(\hat x)=1-\left({k-1\over b(k+1)}\right)^{k-1}>0.$$
Hence $g'(x)>0$ for any $x\in [0,1]$ and therefore $g(x)$ is increasing in this interval. 
\end{proof}

Denote by $\nu$ the translation invariant Gibbs measure which corresponds to unique solution $(1, x_*^k, x_*^k)$. 

\begin{rk}  We do not have explicit formula for solution $v=x_*^k$ if $k\geq 3$.  We only know its existence and uniqueness. But for small values of $k$ one can find the solution. For example, if $k=1$ then $v=x_*=\sqrt{{1+\zeta\over 1+\eta}}$. To give biological meaning of our measure $\nu$, below,  for $k=6$, and fixed parameters as in (\ref{tze}) we numerically find the unique solution of (\ref{V}).   This will be also nice to compare BCI-DNA and DCC-DNA models for the same parameters. 
\end{rk}
Summarizing Lemma \ref{in} and Lemma \ref{yag} we obtain the main result of this section:

\begin{thm} For the BCC-DNA model on the Cayley tree of order $k\geq 1$ if $q=2$, and $\zeta\in (0,1)$, $\eta>1$ then there exists unique translation-invariant Gibbs measure $\nu$.
\end{thm}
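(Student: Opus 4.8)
The plan is to combine the two structural lemmas already established so that the existence–uniqueness of a translation-invariant Gibbs measure reduces to the existence–uniqueness of a positive fixed point of the scalar map (\ref{V}). First I would recall that, by Theorem \ref{ei} (adapted to the BCC-DNA Hamiltonian (\ref{hc}), which gives the boundary-law system (\ref{zc})), translation-invariant Gibbs measures of the BCC-DNA model on $\Gamma^k$ with $q=2$ are in one-to-one correspondence with positive solutions $(u,v,w)\in\mathbb R^3_+$ of the reduced system (\ref{tic}), i.e. with ${\rm Fix}(G)$ for the operator $G$ in (\ref{tico}). So it suffices to show that ${\rm Fix}(G)$ is a singleton.

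Next I would invoke Lemma \ref{in}: under the standing hypotheses $\zeta\in(0,1)$, $\eta>1$, every fixed point of $G$ lies in the line $L=\{(u,v,w): u=1,\ v=w\}$. Hence a fixed point is determined by the single coordinate $v>0$, and plugging $u=1$, $w=v$ into the second equation of (\ref{tic}) shows that $v$ must solve (\ref{V}). Conversely any positive solution $v$ of (\ref{V}) yields the point $(1,v,v)$, and one checks directly that $(1,v,v)\in{\rm Fix}(G)$ (the first and third equations of (\ref{tic}) are automatically satisfied at $u=1$, $w=v$). Thus ${\rm Fix}(G)$ is in bijection with the positive solution set of (\ref{V}).

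Finally I would apply Lemma \ref{yag}, which asserts that for every $k\geq 1$, $\zeta\in(0,1)$, $\eta>1$ the scalar equation (\ref{V}) has exactly one positive solution $v=x_*^k$. Combining the three facts, ${\rm Fix}(G)$ consists of the single point $(1,x_*^k,x_*^k)$, and therefore the BCC-DNA model has a unique translation-invariant Gibbs measure $\nu$, namely the one attached to this solution. Since all three ingredients are already proved in the excerpt, the argument is essentially a bookkeeping assembly; the only place where genuine work occurs — and the one I would flag as the real content — is Lemma \ref{yag}, where monotonicity of $g(x)=bx^{k+1}-x^k+x-a$ on $(0,1)$ is forced by controlling the sign of $g'$ via its unique interior critical point $\hat x=(k-1)/(b(k+1))$. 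I would make sure the write-up states clearly that the bijection ${\rm Fix}(G)\leftrightarrow\{\text{positive solutions of }(\ref{V})\}$ is genuinely two-way (not just that fixed points give solutions of (\ref{V})), since that is what upgrades "at most one" to "exactly one" TIGM.
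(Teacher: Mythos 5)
Your proposal is correct and follows exactly the paper's route: the paper proves this theorem simply by ``summarizing'' Lemma \ref{in} (all fixed points of $G$ lie in $L$) and Lemma \ref{yag} (equation (\ref{V}) has a unique positive solution), which is precisely the assembly you describe. Your extra remark that the correspondence between ${\rm Fix}(G)$ and positive solutions of (\ref{V}) must be checked in both directions is a sensible (and easily verified) clarification, not a deviation.
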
 

Let $(u, v, w)$ be a solution to (\ref{tic}), which by Lemma \ref{in} has the form $(1, v, v)$. The Markov chain (Gibbs measure) corresponding to this solution is defined by the following  matrix 

	$$\mathcal P=\left(\begin{array}{cccc}
	{1\over Y}& {1\over Y}& {v\over Y}& {\eta v\over Y}\\[2mm]
	{ 1\over Y}& {1\over Y}& {\eta v\over Y}& {v\over Y}\\[2mm]
	{ 1\over Z}& {\zeta\over Z}& { v\over Z}& {v\over Z}\\[2mm]
	{\zeta  \over Z}& {1\over Z}& { v\over Z}& {v\over Z}
\end{array}\right),
$$
where $Y=2+(1+\eta)v$,  $Z=1+\zeta+ 2v$.

For concrete parameters $k=6$ and (\ref{tze}) we have unique positive solution to (\ref{V}): $v=0.256135892$. 
For this solution the matrix $\mathcal P$ has the following 
stationary probability vector: $p=(0.391, 0.391, 0.109, 0.109)$.
Then for the corresponding measure $\nu$ we have the following 

 {\bf Domination of barrier zone:} With respect to unique measure $\nu$ of the BCC-DNA model on the Cayley tree of order 6 we have the following equilibrium state:

{\it In a DNA the barrier zones seen with probability 0.782 (where states 1 and 2 seen with probability 0.391) and soft zones seen with probability 0.218 (where states 3 and 4 have  probability 0.109).} 
\begin{rk}
	The last result shows that bubble coalescence does not hold (with high probability) if one considers a Cayley tree as one molecule of condensed DNA with the same parameters as in BCI-DNA model.   But previous section showed that for BCI-DNA model (with the same parameters) the bubble coalescence holds.   
\end{rk}

\section*{Acknowledgements}

The author thanks Institut des Hautes \'Etudes Scientifiques (IHES), Bures-sur-Yvette, France for support of his visit to IHES.  His  work was partially supported by a grant from the IMU-CDC and the fundamental project (number: F-FA-2021-425)  of The Ministry of Innovative Development of the Republic of Uzbekistan.

\section*{Statements and Declarations}
	
{\bf	Conflict of interest statement:} 
The author states that there is no conflict of interest.

\section*{Data availability statements}
The datasets generated during and/or analyzed during the current study are available from the corresponding author on reasonable request.

\end{document}